\setlist{nosep} 
\newtheorem{theorem}{Theorem}
\newtheorem{lemma}[theorem]{Lemma}
\newtheorem{claim}[theorem]{Claim}
\newtheorem{observation}[theorem]{Observation}
\newtheorem{corollary}[theorem]{Corollary}
\theoremstyle{remark}
\newenvironment{proofof}[1]{\begin{proof}[Proof of #1]}{\end{proof}}
\newacronym{ML}{ML}{Machine Learning}
\newacronym{LHS}{LHS}{Left-Hand-Side}
\newacronym{RHS}{RHS}{Right-Hand-Side}
\newacronym{SRPT}{SRPT}{Shortest Remaining Processing Time}
\newacronym{DBP}{DBP}{Dynamic Bin Packing}
\newacronym{CNN}{CNN}{Convolutional Neural Network}
\newacronym{FCFS}{FCFS}{First Comes First Served}
\newacronym{SPT}{SPT}{Shortest Processing Time}
\newacronym{SVF}{SVF}{Smallest Volume First}
\newacronym{WSPT}{WSPT}{Weighted Shortest Processing Time}
\newacronym{WSVF}{WSVF}{Weighted Smallest Volume First}
\newacronym{SJF}{SJF}{Shortest Job First}
\newacronym{VM}{VM}{Virtual Machine}
\newacronym{AWS}{AWS}{Amazon Web Services}
\newacronym{GCP}{GCP}{Google Cloud Platform}
\newacronym{OCO}{OCO}{Online Convex Optimization}
\newacronym{HRDF}{HRDF}{Highest Residual Density First}
\newacronym{HPCS}{HPCS}{High Priority Category Scheduling}
\title{Weighted completion time minimization for capacitated parallel machines}
\author{Ilan Reuven Cohen\thanks{Faculty of Engineering, Bar-Ilan University, Israel.{\tt ilan-reuven.cohen@biu.ac.il}}
	\and {Izack Cohen\thanks{Faculty of Engineering, Bar-Ilan University, Israel. {\tt izack.cohen@biu.ac.i}
	}}
	\and {Iyar Zaks\thanks{Faculty of Industrial Engineering and Management, The Technion - IIT, Israel. {\tt iyarzaks@gmail.com}
	}}
}
\begin{document}
	\maketitle
	\begin{abstract}
		We consider the weighted completion time minimization problem for capacitated parallel machines, which is a fundamental problem in modern cloud computing environments. We study settings in which the processed jobs may have varying duration, resource requirements and importance (weight). Each server (machine) can process multiple concurrent jobs up to its capacity.
		Due to the problem's $\mathcal{NP}$-hardness, we study heuristic approaches with provable approximation guarantees. We first analyze an algorithm that prioritizes the jobs with the smallest volume-by-weight ratio. We bound its approximation ratio with a decreasing function of the ratio between the highest resource demand of any job to the server's capacity.
		Then, we use the algorithm for scheduling jobs with resource demands equal to or smaller than 0.5 of the server's capacity in conjunction with the classic weighted shortest processing time algorithm for jobs with resource demands higher than 0.5. We thus create a hybrid, constant approximation algorithm for two or more machines. We also develop a constant approximation algorithm for the case with a single machine. This research is the first, to the best of our knowledge, to propose a polynomial-time algorithm with a constant approximation ratio for minimizing the weighted sum of job completion times for capacitated parallel machines.
	\end{abstract}

	
	
	%
	
	
	\section{Introduction}
	In this work, we study capacitated machine scheduling problems. These problems were initially encountered in production settings where jobs are processed in batches (e.g., scheduling jobs for heat treatment ovens and wafer fabrication processes \cite{damodaran2013grasp}).
	Recently, interest in these problems has increased because of the search for solutions that can be used in modeling modern cloud computing environments  \cite{fox2013weighted}. Differently than most scheduling models in which a resource serves a single job at any given time \cite[e.g.,][]{graves1981review,pinedo2012scheduling,balouka2019robust,cohen2021adaptive}, in modern cloud computing environments, multiple jobs can run concurrently on the same server subject to its capacity constraints (e.g., memory, cores, bandwidth). This fact can be seen in the following three examples of well-known cloud computing platforms in which resources are simultaneously shared by multiple jobs and clients: \gls{AWS}, Microsoft Azure and \gls{GCP}. These platforms leverage virtualization technologies, such as VMware products and Xen, to allow each physical machine to be shared by multiple jobs. Virtualization also helps in reducing the costs of maintenance, operation and provisioning \cite{malhotra2014virtualization}. 
	
	Managers of cloud computing environments who wish to increase the utilization of their data centers typically resort to improving the scheduling algorithms that allocate jobs to machines. In this context, we focus on minimizing the weighted sum of job completion times, which is one of most common objective functions \cite{fox2013weighted}. The weights imply that some jobs may be more important than others; thus, the scheduler has to take into consideration that the delay of one job can incur a higher ``cost'' than the delay of another.  
	
	The non-capacitated counterparts of the considered problem have been widely researched. One of the most known results is that the \gls{SPT} priority rule minimizes the (non-weighted) sum of job completion times \cite{pinedo2012scheduling}. The \gls{SPT} was extended to the \gls{WSPT}, which was used in the weighted version of the completion time minimization problem. Since the latter problem is $\mathcal{NP}$-complete for more than two machines \cite{garey1979computers}, various solution approaches focus on developing polynomial-time heuristic approximation algorithms that bound the worst case performance with respect to an optimal solution. For concreteness, a desired algorithm $A$ for a minimization problem $P$ would have a $\rho$ approximation ratio such that for any instance $I$ of $P$, $A(I) \leq \rho \cdot OPT(I)$, where $OPT(I)$ is the value of an optimal solution for $I$.  \citeauthor{eastman1964bounds} in \cite{eastman1964bounds} proved that scheduling according to the \gls{WSPT} priority rule provides a constant approximation algorithm.  \citeauthor{kawaguchi1986worst} \cite{kawaguchi1986worst} improved the \gls{WSPT} approximation ratio ($\rho$) to $(1+\sqrt{2}) / 2$ and proved that it is tight.
	
	In the capacitated setting, \citeauthor{im2016scheduling} \cite{im2016scheduling} were the first to develop a constant approximation algorithm for minimizing the \textit{non-weighted} sum of completion times using the \gls{SVF} priority rule combined with the \gls{SPT} priority rule. Our work extends theirs into the more general, weighted case, by combining \gls{WSVF} and \gls{WSPT} with the objective of developing the first algorithm with a constant approximation ratio for the weighted completion time problem in the capacitated setting. Our analysis, moreover, improves their approximation ratio for the non-weighted case.
	
	\subsection{Contributions and Techniques}
	We develop a constant approximation ratio algorithm for minimizing the weighted sum of job completion times on capacitated machines. Here are the main contributions: 
	\begin{enumerate} 
		\item A polynomial-time scheduling algorithm with a $\left(1+\frac{1}{1-\alpha}\right)$-approximation ratio, if the ratio between jobs' demands and the servers' capacities is at most $\alpha$.
		\item A polynomial-time scheduling algorithm with a $4+o(1/M)$ approximation ratio for $M \geq 2$ machines. 
		\item A polynomial-time scheduling algorithm with a $12+\epsilon$ approximation ratio for a single machine ($M$=1). 
		
	\end{enumerate}
	To the best of our knowledge, our result is the first constant approximation algorithm for the weighted completion time minimization problem for the capacitated parallel machine problem. In addition, we improve the approximation guarantees for the non-weighted version. In~\cite{im2016scheduling}, the authors proved a $\left(\frac{3\alpha}{1-\alpha} + 3\right)$-approximation ratio, for the case when the ratio of jobs' demands and servers' capacities is at most $\alpha$, and a $\left(5+o(1/M)\right)$ approximation ratio for $M>1$.
	Our algorithm for $M>1$ partitions the jobs into high- and low-resource demand classes where jobs within the former class require $50\% $ or more of a machine's capacity and jobs in the latter class require less than $50\%$. Accordingly, the algorithm partitions the machines into two groups for processing the two job classes. The high-demand class is scheduled via the \gls{WSPT} and the low-demand class via a \gls{WSVF} priority rule. In the \gls{WSVF} method, jobs are ordered in a non-decreasing order of the ratio between their processing time multiplied by the demand to the weight. Then, the jobs are assigned to a machine according to their priority, with the algorithm assigning the next unscheduled job to the earliest possible time $t$. Our proof involves analyzing the \gls{WSVF} performance for instances where job resource demands are smaller than a constant $\alpha$.
	
	In the analysis of the \gls{WSVF}, we bound the start time of each job, provided that prior to the start time, all the machines processed at least $1-\alpha$ demand of higher priority jobs. Then, 
	we bound the optimal cost, by the optimal cost of a non-capacitated converted instance.
	Finally, we develop an improved bound by using the characterization of \citeauthor{eastman1964bounds} \cite{eastman1964bounds} for the non-capacitated setting.
	For the single machine case, we extend the algorithm of \citeauthor{im2016scheduling} \cite{im2016scheduling} to the weighted case (the full details for this case are presented in the appendix).
	
	\subsection{Prior Work}
	There is a vast amount of research about machine scheduling problems owing to their theoretical and practical importance (interested readers are referred to the reviews by \citeauthor{meiswinkel2018mechanism} \cite{meiswinkel2018mechanism} and \citeauthor{liu2020review} \cite{liu2020review}). For the sake of brevity, we focus on recent capacitated machine scheduling studies. 
	
	Researchers explored several objective functions. One very popular objective is to minimize the processing makespan---that is, the completion time of the last job (e.g., \citeauthor{muter2020exact} \cite{muter2020exact} and \citeauthor{matin2017makespan} \cite{matin2017makespan}).
	Others suggested that in settings with a release time and deadline for each job, a reasonable objective is to maximize the total weight of jobs completed before their deadline (see \cite{albagli2014scheduling} and \cite{guo2017efficient}). Another line of research models the capacitated machine scheduling problem as an online problem in which jobs arrive over time \cite{kumar2019comprehensive}. In this line of research, typical objective functions are to minimize the response time (i.e., the time elapsed from the job's arrival until it is scheduled) and to maximize the throughput \cite {hota2019survey}.
	
	Three works that considered the weighted sum of flow-times or completion times are \cite{fox2013weighted}, \cite{im2016scheduling} and \cite{liu2019online}. \citeauthor{fox2013weighted} \cite{fox2013weighted} considered an online problem of weighted flow time minimization, assuming that jobs can be preempted with no penalty and delay. It is important to note, though, that preemption
	may incur significant switching costs (e.g., setup costs) and memory loss; preemption may be also forbidden due to system restrictions or client commitments.
	
	\citeauthor{liu2019online} \cite{liu2019online}, who also studied an online capacitated machine scheduling problem, assumed that a job can run at a slower rate when receiving a fraction of its demand or that a job can be processed in parallel on different machines. They then used \gls{OCO} to solve the scheduling optimization problem.  These assumptions may hold in specialized computing environments but in standard environments it may be costly or technically infeasible to split a job between machines or to process it at a slower rate using a portion of the required resources. 
	
	\citeauthor{im2016scheduling} \cite{im2016scheduling} proposed a constant approximation algorithm for minimizing the sum of completion times. They, however, did not consider the more general weighted version of the problem. We close this gap by developing an approximation algorithm that solves the weighted version of the problem and improves the approximation ratio for the non-weighted version that was presented in \cite{im2016scheduling}. 
	
	\section{Formal Problem Definition}
	We consider $N$ jobs that need to be processed by $M$ identical machines. Each job $j$ has a processing time $p_j$, demand $d_j$ and weight $w_j$, which are known in advance.
	We assume, without loss of generality (hereafter, w.l.o.g.), that $p_j\geq 1$,  and  $d_j \in (0,1]$ is a fraction of the required demand with respect to a machine's capacity. We denote $v_j = p_j \cdot d_j$ as job $j$'s volume.
	We focus on a non-preemptive schedule, meaning that a started job is processed without interruption until its completion. 
	The scheduler assigns each job to a machine and determines its start time, $s_j$; accordingly, the completion time of the job is $c_j = s_j + p_j$.
	
	Let $G^i(t)$ be the set of jobs processed by machine $i$ at time $t\in T$, where $T$ is an upper bound on the overall processing time. $j \in G_i(t)$ if job $j$ is assigned to machine $i$ and $t\in [s_j,c_j)$.
	A feasible schedule must ensure that the total demand of the jobs assigned to a machine does not exceed its capacity, at any given time. Mathematically,
	\begin{equation}\label{eq:capacity}
		\sum\limits_{j \in G^i(t)} d_{j} \leq 1 \; \; \; \forall i \in M , t \in T.
	\end{equation}
	
	Our goal is to find a feasible solution that minimizes the weighted sum of completion times:
	\begin{center}$ \min \sum \limits_{j=1}^N w_j \cdot c_{j}. $\end{center}
	
	As mentioned, the problem is $\mathcal{NP}$-complete. Accordingly, we are looking for a polynomial-time scheduling algorithm with a guarantee on the maximal ratio between the objective function value achieved by the algorithm and the optimal solution value. We want the developed approximation algorithm to provide a constant ratio. As discussed next, we base our algorithm on the \gls{WSVF} priority rule.
	
	\section{WSVF Algorithm and Analysis}
	The \gls{WSVF} algorithm orders the jobs in a non-decreasing order according to their volume over weight values, i.e., $(p_j \cdot d_j) / w_j$. The algorithm schedules the highest priority unassigned job (the one with the smallest value) at the earliest time $t$ on a machine that is available to process the job until it is completed; see Figure~\ref{fig:wsvf} for an example.
	

	\begin{figure}[H]
		\parbox{0.5\textwidth}{
			\centering
			\includegraphics[width=0.8\linewidth]{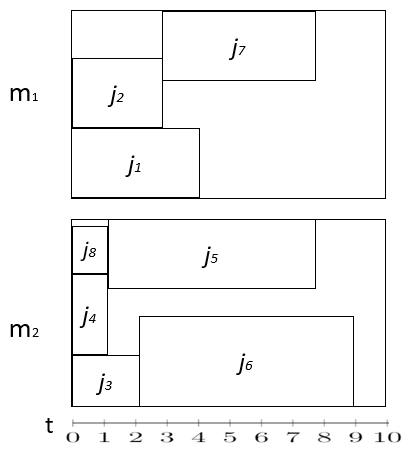}
			\label{fig:sub1}
		}
		\parbox{0.5\textwidth}{
			\centering
			\begin{tabular}{c c c c c}
				\hline
				jobs & $p_j$ & $d_j$ & $w_j$ & $\frac{p_j\cdot d_j}{w_j}$\\ [0.5ex]
				\hline
				1 & 4 & 0.4 & 8 & 0.2\\
				\hline
				2 & 3 & 0.4 & 5 & 0.24\\
				\hline
				3 & 2 & 0.25 & 1.5 & 0.33\\
				\hline
				4 & 1 & 0.45 & 1 & 0.45\\
				\hline
				5 & 7 & 0.4 & 4 & 0.7\\
				\hline
				6 & 7 & 0.5 & 4& 0.875\\
				\hline
				7 & 5 & 0.45 & 2 & 1.125\\
				\hline
				8 & 1 & 0.28 & 0.2 & 1.4\\
				\hline
			\end{tabular}
			\label{fig:sub2}
		}
		\caption{An illustration of the \gls{WSVF} algorithm's run. In this illustration we have $M=2$ machines and $N=8$ jobs. Each job is defined by its parameters ($p_j,d_j,w_j$) as described above. The jobs are sorted by a non-decreasing order of $p_j\cdot d_j / w_j$, for example, $p_1\cdot d_1 / w_1 = 0.2 < 0.24 = p_2 \cdot d_2 / w_2$. Then running over the jobs according to the priority order, each is scheduled on the machine with the earliest available $d_j$ capacity for $p_j$ time steps. Here, jobs $j_1,j_2,j_3$ and $j_4$ are scheduled at time $0$ according to the available machine capacities. Job $j_5$ can be scheduled only after the completion of $j_4$ so $s_5=c_4=1$, and jobs $j_6$ and $j_7$ are waiting for the completion of jobs $j_3$ and $j_2$, respectively. Job $j_8$, although it has a low priority, is scheduled at time $0$ because machine $m_2$ has a resource 'window' that is large enough to process job $j_8$. The objective function value in the example is $\sum \limits_{j=1}^8 w_j \cdot c_j = 8\cdot4+5\cdot3+ \dots + 2 \cdot 8 + 0.2 \cdot 1 = 135.2$.}
		\label{fig:wsvf}
	\end{figure}

	\begin{algorithm}[H]\label{alg:wsvf}
		\caption{\gls{WSVF}}
		\begin{algorithmic}
			
			\STATE Set $\mathcal{J}$ to be list of jobs, sorted in a non-decreasing order of their $(p_j \cdot d_j) / w_j$ values
			\FOR { job $j \in \mathcal{J}$}
			\STATE Schedule job $j$ on machine $i$ that provides the earliest start time $t$, s.t. 
			$$ d_j +  \displaystyle\sum_{h \in G_i(t')} d_{h} \leq 1 \text{, for all } t' \in [t, t+p_j)$$
			\ENDFOR
		\end{algorithmic}
	\end{algorithm}
	
	By definition, the assignment of Algorithm \ref{alg:wsvf} is feasible. Next, we show that its approximation ratio depends on $\alpha=\max_{j \in N}d_j$, the maximum resource demand of any $j \in N$.
	
	\begin{theorem}\label{theor:WSVF}
		If any job requires at most $\alpha < 1$, \gls{WSVF} is a $\left(\frac{1}{1-\alpha}+1\right)$-approximation algorithm for the weighted completion time minimization problem.
	\end{theorem}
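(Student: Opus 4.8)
The plan is to split the \gls{WSVF} cost as $\sum_j w_j c_j = \sum_j w_j s_j + \sum_j w_j p_j$ and to bound each piece by a multiple of the optimum $OPT$. The second piece is immediate: in every feasible schedule $c_j\ge p_j$, hence $\sum_j w_j p_j \le OPT$. So it suffices to prove $\sum_j w_j s_j \le \tfrac{1}{1-\alpha}\,OPT$, and the whole theorem reduces to controlling the start times chosen by the greedy rule.

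The heart of the argument is a per-job bound on the start time. Write $h\prec j$ when job $h$ precedes job $j$ in the \gls{WSVF} ordering (equivalently, $v_h/w_h\le v_j/w_j$ with ties broken by index); these are exactly the jobs already placed when the algorithm considers $j$. I would prove $s_j \le \frac{1}{M(1-\alpha)}\sum_{h\prec j} v_h$. The idea is a density statement: for (almost) every instant $t<s_j$, each of the $M$ machines is running jobs $h\prec j$ of total demand strictly exceeding $1-d_j\ge 1-\alpha$, for otherwise the greedy rule would have been able to place $j$ no later than $t$, contradicting that $s_j$ is the earliest feasible start. Letting $g(t)$ be the total demand, summed over all machines, of jobs $h\prec j$ running at time $t$, we have $\int_0^\infty g(t)\,dt=\sum_{h\prec j} v_h$, while the density statement gives $\int_0^{s_j} g(t)\,dt\ge M(1-\alpha)\,s_j$; together these give the bound. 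Making the density statement rigorous is the step I expect to be the main obstacle: the delicate point is that a machine having spare capacity at a single instant $t$ does not let $j$ start at $t$, which requires spare capacity throughout the whole window $[t,t+p_j)$, and the greedy placements of the jobs $h\prec j$ were made against earlier partial schedules rather than the final one. The route through this is to show that any maximal idle window on any machine before $s_j$ has length strictly less than $p_j$ and is terminated by a job that is itself longer than that window, so that no usable slot for $j$ survives on any machine before $s_j$.

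Granting the lemma, I would convert $\sum_j w_j s_j$ into a lower bound on $OPT$. Summing the per-job bound with weights, $\sum_j w_j s_j \le \frac{1}{M(1-\alpha)}\sum_j w_j\sum_{h\prec j} v_h \le \frac{1}{M(1-\alpha)}\sum_j w_j\Big(v_j+\sum_{h\prec j}v_h\Big)$, and the last double sum is exactly the cost $Z_1$ of the optimal single-machine schedule of auxiliary jobs having processing time $v_j$ and weight $w_j$ (that schedule runs them in non-decreasing $v_j/w_j$ order, i.e. by \gls{WSPT}, under which job $j$ completes at $v_j+\sum_{h\prec j}v_h$). It remains to show $Z_1\le M\cdot OPT$. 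For this, observe that the capacitated optimum already induces a feasible schedule of the auxiliary jobs on a single machine of speed $M$: in $OPT$, at every instant the $M$ machines jointly process demand at most $M$, and each job $h$ has received total processing $d_h p_h=v_h$ by its completion time $c_h$. Since the optimum of a speed-$M$ single machine on jobs of size $v_j$ equals $\tfrac1M Z_1$ (rescale time by $M$ and use optimality of \gls{WSPT} on one machine, with preemption and rate-splitting giving no advantage for $\sum w_jC_j$), we get $OPT\ge\tfrac1M Z_1$, i.e. $Z_1\le M\cdot OPT$.

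Combining the pieces, $\sum_j w_j s_j\le\frac{1}{M(1-\alpha)}\cdot M\cdot OPT=\frac{1}{1-\alpha}\,OPT$, so $\sum_j w_j c_j\le\big(\tfrac{1}{1-\alpha}+1\big)OPT$, which is the claimed ratio. As a sanity check on the constant, a single long job of demand $\alpha$ that blocks a co-demand-$\alpha$ job (possible once $\alpha>\tfrac12$) drives the per-job bound toward equality as $\alpha\to\tfrac12$, so the factor $\tfrac1{M(1-\alpha)}$ in the lemma is essentially best possible; the Eastman-type refinement mentioned earlier is not needed here but would sharpen the chain through $Z_1$.
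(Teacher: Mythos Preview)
Your decomposition and overall strategy coincide with the paper's: split $\sum_j w_j c_j$ into $\sum_j w_j p_j\le OPT$ and $\sum_j w_j s_j$, bound each $s_j$ by $\tfrac{1}{M(1-\alpha)}\sum_{h\prec j}v_h$, recognize the resulting double sum as the single-machine \gls{WSPT} cost $Z_1=C_1^*(\hat I)$ on the ``compressed'' jobs of size $v_j$, and finish with $Z_1\le M\cdot OPT$. Your route to the last inequality via a single speed-$M$ machine (preemption and fractional rates do not help for $\sum w_jC_j$) is a mild variant of the paper's two-step chain $C^*_M(I)\ge C^*_M(\hat I)\ge \tfrac1M C^*_1(\hat I)$ through Eastman et al.; both give the same bound.

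The one genuine gap is in your argument for the density statement. You hedge with ``for (almost) every instant $t<s_j$'' and then try to control maximal idle windows by showing they are shorter than $p_j$ and terminated by a long job. That conclusion only recovers the tautology that no usable slot for $j$ exists before $s_j$; it does \emph{not} bound the total length of such windows, so the integral inequality $\int_0^{s_j}g(t)\,dt\ge M(1-\alpha)s_j$ would not follow. What is actually true, and what the paper proves, is that there are \emph{no} such windows at all: for every machine $i$ and every $t<s_j$, the load $D^i_{j-1}(t)$ from higher-priority jobs strictly exceeds $1-\alpha$. The clean device is a monotonicity invariant proved by induction on $j$ (restricted to each machine): $\min\{D^i_j(t),\,1-\alpha\}$ is non-increasing in $t$. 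The inductive step is exactly the observation you were circling around---if $D^i_{j-1}(t_0)\le 1-\alpha$ for some $t_0<s_j$, then by the invariant $D^i_{j-1}(t)\le 1-\alpha$ for all $t\ge t_0$, hence $D^i_{j-1}(t)+d_j\le 1$ throughout $[t_0,t_0+p_j)$ and the greedy rule would have started $j$ no later than $t_0$, a contradiction. This removes your ``main obstacle'' in one stroke and makes the per-job bound exact rather than approximate.
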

	
	To prove Theorem~\ref{theor:WSVF}, we need to establish bounds on a problem instance $\hat{I}$, which is a compressed instance of the original problem instance ${I}$, where $I$ denotes the set of jobs $\mathcal{J}$ ordered by the \gls{WSVF} priority rule. We ``compress'' the original jobs such that their demands become $1$ (as in the non-capacitated setting). For this, we take the ordered set of jobs $I$, and for each job in the set, we fix the duration to be $\hat{p}_j= p_j \cdot d_j$, which is the so-called volume of job $j$ in the original instance, and fix the demand as $\hat{d}_j = 1$. 
	We note that the compressed instance preserves the ordering of the jobs and the list of jobs is ordered according to the \gls{WSPT} priority rule. 
	
	We bound the cost of \gls{WSVF} on $I$ by proving that the optimal solution value of the corresponding compressed instance $\hat{I}$ is smaller than or equal to the optimal solution value of $I$. 
	For ease of notation, we denote by $C_M(I)$ the objective function value (cost) achieved by algorithm \gls{WSVF} for instance $I$ using $M$ machines and by $C^*_M(I)$ the optimal cost of scheduling $I$ on $M$ machines.
	
	We note the following two observations: 1) $C_N(I) = \sum_{j\in I} w_j p_j$ since when using $N$ machines and $N=|I|$, it is always optimal to assign each job to a machine, and 2) for a single machine, $C_1(\hat{I}) = C^*_1(\hat{I}) = \sum_j (w_j \cdot \sum\limits_{h=1}^{j}\hat{p}_h)$ by Smith's rule~\cite{smith1956various}.

	We begin by establishing the relations between $C^*_M(I), C^*_M(\hat{I}),C^*_1(\hat{I})$ and $C_M(I)$.
	First, since $c_j \geq p_j$ always holds:
	
	\begin{observation}\label{obv:cn}
		$C_N(I) \leq C^*_M(I)$.
	\end{observation}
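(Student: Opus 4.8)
The plan is to observe that this is essentially immediate from the fact that completion times are bounded below by processing times, together with the identity for $C_N(I)$ already recorded above. First I would recall that $C_N(I)=\sum_{j\in I}w_j p_j$: when the number of machines equals the number of jobs, \gls{WSVF} (indeed any reasonable rule) places each job on its own dedicated machine starting at time $0$, so the capacity constraint~\eqref{eq:capacity} is trivially met (a single job of demand $d_j\le 1$ per machine) and $c_j=p_j$ for every $j$, whence $C_N(I)=\sum_j w_j p_j$.

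Next, I would consider any feasible schedule of $I$ on $M$ machines, in particular an optimal one. Since start times are nonnegative, every job satisfies $c_j=s_j+p_j\ge p_j$. Multiplying by the nonnegative weight $w_j$ and summing over all jobs gives $C^*_M(I)=\sum_j w_j c_j\ge \sum_j w_j p_j = C_N(I)$, which is exactly the claimed inequality.

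There is no genuine obstacle here: the argument is two lines. The only point worth stating explicitly is that the $N$-machine schedule used to evaluate $C_N(I)$ is feasible, which holds because each machine processes exactly one job whose demand is at most the capacity $1$; and that the bound $C^*_M(I)\ge\sum_j w_j p_j$ is insensitive to the value of $M$, so the inequality is unaffected by how $M$ compares to $N$.
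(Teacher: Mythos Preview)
Your proposal is correct and follows exactly the paper's approach: the paper justifies the observation in a single clause, ``since $c_j \geq p_j$ always holds,'' together with the identity $C_N(I)=\sum_j w_j p_j$ noted just before it. Your write-up simply spells out these two ingredients more explicitly.
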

	
	Next, we bound the optimal cost on $M$ machines by the optimal cost on the compressed instance on $M$ machines.
	\begin{claim}\label{clm:opt_star}
		$C^*_M(\hat{I}) \leq C^*_M(I)$.
	\end{claim}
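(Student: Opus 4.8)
The plan is to prove the inequality constructively: starting from an optimal schedule $\sigma^*$ for $I$ on $M$ machines, I will exhibit a feasible schedule $\hat\sigma$ for the compressed instance $\hat I$ on $M$ machines whose cost is at most the cost of $\sigma^*$; since $C^*_M(\hat I)$ is the minimum cost over all feasible schedules of $\hat I$, this immediately yields $C^*_M(\hat I) \le C^*_M(I)$. The construction keeps the machine assignment of $\sigma^*$. For each machine $i$, let $J_i$ be the set of jobs that $\sigma^*$ runs on $i$, and order $J_i = (j_1, j_2, \dots, j_r)$ by non-decreasing completion time in $\sigma^*$. In $\hat\sigma$, schedule these jobs on machine $i$ back-to-back in that order, so that $j_l$ occupies the interval $\bigl[\sum_{h<l}\hat p_{j_h},\ \sum_{h\le l}\hat p_{j_h}\bigr)$, where $\hat p_j = p_j d_j = v_j$. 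Since every job of $\hat I$ has demand $1$ and no two jobs share a machine-interval in $\hat\sigma$, the schedule respects \eqref{eq:capacity}, and it clearly uses no more machines than $\sigma^*$.

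The crux is an area (volume) inequality that lets me compare completion times. Fix a machine $i$ and a prefix $\{j_1,\dots,j_k\}$ of $J_i$. In $\sigma^*$ each of these jobs starts at a time $\ge 0$ and completes by $c_{j_k}$ (by the chosen ordering), hence all of them are processed on machine $i$ entirely within the window $[0,c_{j_k})$; since \eqref{eq:capacity} forces the total demand of the jobs simultaneously present on machine $i$ to be at most $1$ at every instant, integrating over $[0,c_{j_k})$ gives $\sum_{l=1}^{k} v_{j_l} \le c_{j_k}$. Therefore the completion time of $j_k$ in $\hat\sigma$ satisfies $\hat c_{j_k} = \sum_{l=1}^{k}\hat p_{j_l} = \sum_{l=1}^{k} v_{j_l} \le c_{j_k}$. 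Applying this to every job on every machine yields $\hat c_j \le c_j$ for all $j$, and because the weights are nonnegative, $C^*_M(\hat I) \le \sum_j w_j \hat c_j \le \sum_j w_j c_j = C^*_M(I)$.

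I expect the only delicate point — and the place to be careful in the write-up — is that the area bound must be applied to \emph{every} prefix of each machine's job list, not merely to the full list, so that the serialized schedule dominates $\sigma^*$ completion-time-wise rather than only matching its makespan; ordering $J_i$ by completion time is exactly what guarantees that each prefix satisfies the hypothesis ``all of these jobs finish by $c_{j_k}$''. The remaining steps — feasibility of $\hat\sigma$ for $\hat I$, the machine count, and passing from completion-time domination to cost domination — are routine bookkeeping.
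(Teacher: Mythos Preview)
Your proposal is correct and follows essentially the same approach as the paper: keep the optimal machine assignment, order each machine's jobs by their optimal completion times, serialize the compressed jobs in that order, and use a volume (area) argument to show that each prefix's total volume is at most the optimal completion time of its last job, hence $\hat c_j \le c^*_j$ for every job. Your write-up is in fact a bit more explicit than the paper's about the integration step and about why the prefix ordering is needed.
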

	\begin{proof}
		Consider, w.l.o.g., one of the $M$ machines and let $j_{1}, j_{2} ,...., j_{k}$ be the jobs assigned to this machine ordered by their completion times in the optimal schedule (which has a cost $C^*_M(I)$).
		We prove that a scheduler that processes the compressed jobs on the same machine in this order has a smaller than or equal to cost compared to the uncompressed optimal instance.
		Let $c^*_j$ be the completion time of job $j$ in the optimal original instance, and $\hat{c}_j$ be its completion time in the compressed instance schedule. We will show that $c^*_j\geq \hat{c}_j$.
		First, observe that $\hat{c}_{j} = \sum_{h=1}^j \hat{p}_h$, since in the compressed schedule, the machine can process a single job at each time. 
		Next, by definition, at time $c^*_j$, all the first $j$ jobs have been completed on this machine. By the volume preservation rule, it took at least $\sum_{h=1}^j {p_{h}\cdot d_{h}}$ time units to complete the jobs since at any time step, the machine can process a maximal volume of $1$; therefore, $c^*_j \geq \sum_{h=1}^j {p_{h}\cdot d_{h}}  = \sum_{h=1}^j \hat{p}_h = \hat{c}_{j}$.
		Hence, when we sum over all machines and jobs,
		$$C^*_M(\hat{I}) \leq \sum\limits_{j \in I} w_j\cdot  \hat{c}_j \leq \sum\limits_{j \in I} w_j\cdot  c^*_j = C^*_M(I).$$
		The \gls{LHS} inequality holds since the optimal cost $C^*_M$ is equal to or smaller than the cost of any schedule and the \gls{RHS} inequality follows from extending the volume preservation argument to $M$ machines.
	\end{proof}
	
	Next, we use Theorem 1 from \cite{eastman1964bounds} to construct a lower bound for $C^*_M(\hat{I})$ using a single machine optimal cost $C^*_1(\hat{I})$.
	
	\begin{claim} \label{clm:1_fast}
		$C^*_M(\hat{I})\geq \frac{1}{M}C^*_1(\hat{I}).$
	\end{claim}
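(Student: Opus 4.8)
The plan is to derive this from (a weak form of) the bound of \citeauthor{eastman1964bounds}~\cite{eastman1964bounds}: since $\hat{I}$ has all demands equal to $1$, scheduling $\hat{I}$ is exactly an ordinary (non-capacitated) weighted completion time problem on $M$ parallel machines, with processing times $\hat{p}_j = p_j d_j$ and weights $w_j$, so that result applies verbatim. In fact we only need the part of the Eastman--Even--Isaacs inequality asserting $C^*_1(\hat{I}) \le M \cdot C^*_M(\hat{I})$, and below I sketch a short self-contained argument that I would include for completeness.

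First I would fix an optimal $M$-machine schedule for $\hat{I}$ and let $\hat{c}_j$ denote the completion time of job $j$ in it, so that $\sum_{j} w_j \hat{c}_j = C^*_M(\hat{I})$. Relabel the jobs so that $\hat{c}_1 \le \hat{c}_2 \le \cdots \le \hat{c}_N$ (ties broken arbitrarily), and consider the single-machine schedule $\sigma$ that processes the jobs in exactly this order with no idle time; in $\sigma$ the completion time of job $k$ equals $\sum_{h=1}^k \hat{p}_h$.

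The key step is the inequality $\sum_{h=1}^k \hat{p}_h \le M \cdot \hat{c}_k$ for every $k$. This holds because in the optimal $M$-machine schedule each of the jobs $1,\dots,k$ has completion time at most $\hat{c}_k$, and is therefore fully processed during $[0,\hat{c}_k)$; since the $M$ machines can jointly perform at most $M \cdot \hat{c}_k$ units of work in that interval, while jobs $1,\dots,k$ alone contribute $\sum_{h=1}^k \hat{p}_h$ units, the bound follows. Consequently the cost of $\sigma$ is $\sum_{k} w_k \sum_{h=1}^k \hat{p}_h \le \sum_{k} w_k \cdot M \hat{c}_k = M \cdot C^*_M(\hat{I})$. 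Because $C^*_1(\hat{I})$ is the minimum cost over all single-machine schedules (attained by Smith's rule~\cite{smith1956various}), this yields $C^*_1(\hat{I}) \le M \cdot C^*_M(\hat{I})$, which is exactly the claim.

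I do not expect a genuine obstacle here: the only delicate point is the area/volume counting in the key step and the consistent treatment of jobs with equal completion times, both routine. One may also observe that $\hat{I}$ is already \gls{WSPT}-ordered, so in fact $C^*_1(\hat{I}) = \sum_{j} w_j \sum_{h=1}^j \hat{p}_h$, but this identity is not needed, since any fixed processing order gives a valid upper bound on $C^*_1(\hat{I})$.
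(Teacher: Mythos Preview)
Your argument is correct. Both you and the paper reduce to the Eastman--Even--Isaacs bound for the uncapacitated problem (which is legitimate since $\hat{I}$ has unit demands), but the routes differ: the paper quotes the full inequality $C^*_M(\hat{I})-\tfrac{1}{2}C_N(\hat{I})\ge \tfrac{1}{M}\bigl(C^*_1(\hat{I})-\tfrac{1}{2}C_N(\hat{I})\bigr)$ and then drops the nonnegative slack $\tfrac{1}{2}C_N(\hat{I})(1-\tfrac{1}{M})$, whereas you supply a self-contained proof of the weaker inequality $C^*_1(\hat{I})\le M\,C^*_M(\hat{I})$ via the work-counting bound $\sum_{h\le k}\hat p_h\le M\hat c_k$ applied to an optimal $M$-machine schedule reordered by completion time. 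Your version is more elementary and avoids any black-box citation; the paper's version is terser given the reference, and in principle retains a sharper intermediate inequality, though that extra strength is not used anywhere.
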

	\begin{proof}
		\citeauthor{eastman1964bounds} \cite{eastman1964bounds} proved a tight bound on the optimal cost of the non-capacitated case with $M$ machines with respect to the optimal cost of an instance with a single machine. Specifically, they proved 
		$$C^*_M(\hat{I})-\frac{1}{2}C_N(\hat{I})\geq \frac{1}{M}(C^*_1(\hat{I}) - \frac{1}{2}C_N(\hat{I})).$$
		Therefore, we have  
		$$C^*_M(\hat{I})\geq \frac{1}{M}C^*_1(\hat{I}) + \frac{1}{2}C_N(\hat{I})(1-\frac{1}{M})\geq \frac{1}{M}C^*_1(\hat{I})$$ 
		since $\frac{1}{2}C_N(\hat{I})(1-\frac{1}{M})\geq 0$.
		
	\end{proof}
	We now compare $C_M(I)$ and $C_1(\hat{I})$ by bounding the start time $s_j$ for each job $j\in N$. We prove that $s_j$ can be bounded by a factor that depends on $\alpha$ and $M$ times the total volume of jobs preceding $j$ in $I$ (which is ordered by \gls{WSVF}). In other words, $\sum_{h<j} v_h$, which is the start time of job $j$ in $C_1(\hat{I})$.  
	
	\begin{lemma} \label{lem:start_time}
		$\forall j \in N: s_{j} \leq \frac{1}{(1-\alpha)M} \Sigma_{h=1}^{j-1} v_{h}$
	\end{lemma}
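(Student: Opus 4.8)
The plan is to deduce the lemma from a \textbf{pointwise occupancy bound}: in the WSVF schedule, for every job $j$, every machine $i$, and every time $t \in [0, s_j)$, the jobs of priority higher than $j$ that occupy machine $i$ at time $t$ have total demand at least $1-\alpha$. Granting this, the lemma is immediate: $\sum_{h=1}^{j-1} v_h = \sum_{h<j} p_h d_h$ is exactly the total demand--time ``area'' used by jobs $1,\dots,j-1$ over all $M$ machines, so restricting to the window $[0,s_j)$ and applying the pointwise bound on each machine gives $\sum_{h=1}^{j-1} v_h \ge \sum_{i=1}^{M}\int_0^{s_j}(1-\alpha)\,dt = (1-\alpha)\,M\,s_j$, which rearranges to $s_j \le \frac{1}{(1-\alpha)M}\sum_{h=1}^{j-1}v_h$. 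The hypothesis $\alpha < 1$ (together with $d_j \le \alpha$) enters only through $1-d_j \ge 1-\alpha$.

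The pointwise bound is the crux. A cruder route---splitting $[0,s_j)$ into maximal intervals on which $j$ does, respectively does not, fit on machine $i$---stalls, because although each ``fits'' interval has length $<p_j$, these could a priori aggregate to a long total; what rescues us is that the greedy rule forbids a machine from being that lightly loaded before $s_j$. To establish the bound I would argue by infinite descent along the priority order. Fix $i$ and $t_0 < s_j$ and suppose the higher-priority demand on machine $i$ at time $t_0$ is strictly below $1-\alpha$. When WSVF processes $j$, only jobs $1,\dots,j-1$ are present, so this demand is $<1-\alpha\le 1-d_j$; hence $j$ fits on machine $i$ at the instant $t_0$, and since WSVF did not start $j$ at $t_0$ (because $s_j>t_0$), the window $[t_0,t_0+p_j)$ must contain a conflicting instant on machine $i$, where the occupying higher-priority demand exceeds $1-d_j$. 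That demand strictly exceeds the demand at $t_0$, so some higher-priority job $g$ (with $g<j$) must \emph{start} inside $(t_0,t_0+p_j)$ on machine $i$. Repeat the argument with $g$ in place of $j$: when $g$ was processed even fewer jobs were present, so the demand at $(i,t_0)$ was still $<1-\alpha\le 1-d_g$, so $g$ also fit at the instant $t_0$ on machine $i$; since the greedy rule placed $g$ strictly after $t_0$, its window $[t_0,t_0+p_g)$ contained a conflicting instant on machine $i$, forcing a still-higher-priority job $g'<g$ to start inside that window on machine $i$. Iterating yields an infinite strictly decreasing sequence $j>g>g'>\cdots$ of positive integers, which is impossible.

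The delicate step is the descent, specifically the implication ``$g$ fits on machine $i$ at the instant $t_0$ but is placed strictly after $t_0$, therefore the whole window $[t_0,t_0+p_g)$ was blocked on machine $i$ at the time $g$ was processed''---here one must use that WSVF assigns each job to the \emph{globally} earliest feasible start, so if that window had been free then $g$ would have begun no later than $t_0$. One must also consistently use quantities \emph{as seen at the moment each job is processed} (only strictly-higher-priority jobs present) and note that this occupancy is monotone nondecreasing along the processing order, which is exactly what propagates the hypothesis ``demand at $(i,t_0)$ is $<1-\alpha$'' down the entire descent. The factor $\frac{1}{M}$ then costs nothing extra: the pointwise occupancy bound holds on each of the $M$ machines separately, and we simply sum the $M$ contributions.
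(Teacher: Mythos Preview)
Your argument is correct and follows the same high-level decomposition as the paper: first establish the \emph{pointwise occupancy bound} (every machine carries at least $1-\alpha$ demand from higher-priority jobs throughout $[0,s_j)$), then integrate over $[0,s_j)\times\{1,\dots,M\}$ and compare with $\sum_{h<j} v_h$. The paper packages the pointwise bound as a monotonicity invariant, proving by induction on $j$ that $\min\{D^i_j(t),1-\alpha\}$ is non-increasing in $t$ (Claim~\ref{clm:invariant}), and then reads off the occupancy bound as a corollary. Your infinite-descent argument is the contrapositive of that same induction, unrolled: a hypothetical light instant $(i,t_0)$ forces a higher-priority job $g<j$ to start after $t_0$ on machine $i$, and the hypothesis propagates because the demand at $(i,t_0)$ seen when $g$ is processed can only be smaller. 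The descent is arguably more direct for the corollary you actually need, while the paper's invariant is a slightly stronger structural statement; either yields the lemma with the identical area-counting finish.
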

	
	\begin{proof}

		To prove Lemma~\ref{lem:start_time}, we prove that prior to $s_j$, each machine processes a total demand of at least $1-\alpha$ on jobs with a higher priority than $j$. 
		This property also appeared in \citeauthor{im2016scheduling} \cite{im2016scheduling} and holds for any priority-based algorithm. For a machine $i\in M$ at time $t$ and job $j\in I$, we define  $D^i_{j}(t) = \sum_{j'\in G^i(t),h\leq j} d_{h}$, which is the total demand of jobs
		among the first $j$ jobs that are processed on machine $i$ at time $t$.
		By proving the following property, we prove that for all $t < s_{j}$, we have $D^{i}_{j-1}(t) \geq 1- \alpha$.

		\begin{claim} \label{clm:invariant} 
			For all $i \in M, j \in N$,  $min\{ D^i_j(t), 1-\alpha \}$ is non-increasing with $t$.
		\end{claim}
		
		\begin{proof}
			Choose a machine $i$ (we omit the superscript $i$ when it is clear from the context), and consider w.l.o.g. only jobs assigned to this machine (with the same priority order as in $I$).
			We prove the claim using an induction on $j$. First, consider the case of $j=1$,
			
			$$ \min\{ D_1(t), 1-\alpha \} = \begin{cases} \min \{d_1, 1-\alpha \}&\mbox{if } t \leq p_{1},\\
				0 & \mbox{if } t>p_{1}. \end{cases}$$
			
			It is a non-increasing function with $t$ so the claim holds for $j=1$.
			
			Now we use the induction assumption that the claim is true for $j-1$ and prove the claim for $j>1$. 
			First, we argue that according to the \gls{WSVF}, $D_j(t) = D_{j-1}(t) > 1 - \alpha$, for all $0 \leq t< s_j$. Otherwise, if  $0 \leq t<s_j$ exists such that  $D_{j-1}(t) \leq 1-\alpha$ (that is, $D_{j-1}(t)+\alpha \leq 1$), then for $t\leq t'<s_j$, we have $D_{j-1}(t')\leq 1-\alpha$ by our induction assumption, and since $d_j \leq \alpha$ by our assumption on the input, the \gls{WSVF} would assign job $j$ before time $s_j$. 
			
			Now, we can look at the expression $min \{ D_{j}(t), 1- \alpha \}$ value over time $t$:\\
			$$\min \{ D_{j}(t), 1- \alpha \} = \begin{cases} 1-\alpha &\mbox{if } t < s_{j},\\
				\min \{ D_{j-1}(t)+d_{j}, 1- \alpha \}& \mbox{if } s_{j} \leq t \leq s_{j}+ p_{j}, \\
				\min \{ D_{j-1}(t), 1- \alpha \} & \mbox{if } t > s_{j}+ p_{j}. 
			\end{cases}$$
			Using the above and the induction assumption, we can conclude that $min \{D_{j}(t),1-\alpha\}$ is non-increasing with $t$.
		\end{proof}
		
		\begin{corollary}
			\label{col:usage} 
			For all $i \in M, j \in N$  $0\leq t < s_{j}$, we have  $ D^{i}_{j-1}(t) > 1- \alpha$.
		\end{corollary} 
		
		\begin{proof}
			
			We follow the proof of Claim \ref{clm:invariant}; if at any $t<s_j$, $D_{j-1}(t) \leq 1-\alpha$, then by the invariant of Claim~\ref{clm:invariant}, the \gls{WSVF} would process job $j$ earlier than $s_j$ by its definition.
		\end{proof}
		
		Next, we use Corollary~\ref{col:usage} to set an upper bound on the starting time of every job $j$ and to conclude the proof of Lemma \ref{lem:start_time}. By corollary \ref{col:usage}, we have that each machine used at least $(1-\alpha)$ of its capacity until time $s_{j}$, to process the first $j-1$ jobs.
		Thus, the following inequality holds for $M$ machines
		$$\Sigma_{h=1}^{j-1} v_{h} \geq M(1- \alpha)s_{j},$$
		where the \gls{LHS} is the sum of the first $j-1$ job volumes that were processed on $M$ machines and the \gls{RHS} follows from the lower bound on the used volume extended to $M$ machines. Reorganizing the above formula leads to an upper bound on the starting time of job $j$:
		$$ s_{j} \leq \frac{1}{(1-\alpha)M}  \Sigma_{h=1}^{j-1} v_{h}.$$
	\end{proof}
	
	By using the above results, we can prove the main lemma for bounding $C_M(I)$: 
	\begin{lemma}\label{lem:WSVFmain}
		Given any instance $I$, such that for all $j \in I$, $d_j\leq \alpha$ for $0<\alpha<1$ we have:
		$$C_M(I) \leq C_N(I) + \frac{C_1(\hat{I})}{(1-\alpha)M}$$
	\end{lemma}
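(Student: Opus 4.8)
\textbf{Proof plan for Lemma~\ref{lem:WSVFmain}.}
The plan is to write the cost of \gls{WSVF} as a sum over jobs of $w_j c_j = w_j(s_j + p_j)$, split the sum into the ``start time'' part $\sum_j w_j s_j$ and the ``processing time'' part $\sum_j w_j p_j$, and bound each part separately. The processing part is immediate: by the first observation recorded after the statement of Theorem~\ref{theor:WSVF}, $\sum_{j\in I} w_j p_j = C_N(I)$, so that term contributes exactly $C_N(I)$. This is where the $C_N(I)$ summand in the claimed bound comes from, and it requires no further work.

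Next I would handle the start-time part. Here the key input is Lemma~\ref{lem:start_time}, which gives $s_j \le \frac{1}{(1-\alpha)M}\sum_{h=1}^{j-1} v_h$ for every job $j$. Multiplying by $w_j \ge 0$ and summing over all $j\in I$ yields
\begin{equation*}
\sum_{j\in I} w_j s_j \;\le\; \frac{1}{(1-\alpha)M}\sum_{j\in I} w_j \sum_{h=1}^{j-1} v_h \;\le\; \frac{1}{(1-\alpha)M}\sum_{j\in I} w_j \sum_{h=1}^{j} v_h.
\end{equation*}
The last step only enlarges the right-hand side by adding the nonnegative diagonal terms $w_j v_j$. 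Now I invoke the second observation recorded in the excerpt: since the compressed instance $\hat I$ has $\hat p_h = v_h = p_h d_h$ and is ordered by \gls{WSPT}, Smith's rule gives $C_1(\hat I) = \sum_{j} w_j \sum_{h=1}^{j}\hat p_h = \sum_j w_j \sum_{h=1}^j v_h$. Hence $\sum_{j\in I} w_j s_j \le \frac{C_1(\hat I)}{(1-\alpha)M}$.

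Combining the two bounds gives
\begin{equation*}
C_M(I) \;=\; \sum_{j\in I} w_j(s_j + p_j) \;=\; \sum_{j\in I} w_j s_j + \sum_{j\in I} w_j p_j \;\le\; \frac{C_1(\hat I)}{(1-\alpha)M} + C_N(I),
\end{equation*}
which is exactly the claimed inequality. I do not anticipate a genuine obstacle here, since all the heavy lifting (the invariant argument bounding $D^i_{j-1}(t)$, and hence $s_j$) has already been done in Lemma~\ref{lem:start_time}; the only point to be careful about is the index bookkeeping when passing from $\sum_{h=1}^{j-1}$ to $\sum_{h=1}^{j}$ so as to match the exact form of $C_1(\hat I)$ given by Smith's rule, and noting that this relaxation is valid precisely because all volumes $v_h$ and weights $w_j$ are nonnegative.
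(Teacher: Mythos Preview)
Your proposal is correct and follows essentially the same route as the paper's own proof: both split $w_j c_j$ into $w_j p_j$ and $w_j s_j$, identify $\sum_j w_j p_j = C_N(I)$, bound $s_j$ via Lemma~\ref{lem:start_time}, and then relax the inner sum from $\sum_{h=1}^{j-1}\hat p_h$ to $\sum_{h=1}^{j}\hat p_h$ to recognize $C_1(\hat I)$ by Smith's rule. There is no substantive difference between the two arguments.
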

	
	\begin{proof}
		For every job $j\in N$
		\begin{equation} \label{eq:singm}
			c_j = p_j + s_j\leq p_j + \frac{1}{(1-\alpha)M}\cdot \sum\limits_{h=1}^{j-1}v_h =p_j + \frac{1}{(1-\alpha)M}\cdot\sum\limits_{h=1}^{j-1}\hat{p}_h,    
		\end{equation}
		where the \gls{LHS} equality exists by definition, the inequality follows from Lemma \ref{lem:start_time}, and the \gls{RHS} equality follows from the definition $v_h= d_h\cdot p_h= \hat{p}_h$.
		
		Summing over all jobs to find the total cost, we have:
		\begin{eqnarray*}
			C_M(I) &=& \sum_{j=1}^{N} w_j \cdot c_j \\ &\leq& \sum_{j=1}^{N}  w_j \cdot p_j + \sum_{j=1}^{N}  w_j \cdot \left(\frac{1}{(1-\alpha)M}\cdot\sum_{h=1}^{j-1}\hat{p}_h\right) \\ &=& \sum\limits_{j=1}^{N}  w_j \cdot p_j + \frac{1}{(1-\alpha)M} \cdot \sum\limits_{j=1}^{N} (w_j \cdot \sum\limits_{h=1}^{j-1}\hat{p}_h) \\
			&\leq& C_N(I) + \frac{C_1(\hat{I})}{(1-\alpha)M}, 
		\end{eqnarray*}
		
		where the first inequality follows from Equation~\ref{eq:singm} and the second inequality holds since
		$C_N(I) = \sum\limits_{j=1}^{n}  w_j \cdot p_j$, and $C_1(\hat{I}) = \sum\limits_{j=1}^{n} (w_j \cdot \sum\limits_{h=1}^{j}\hat{p}_h)\geq \sum\limits_{j=1}^{n} (w_j \cdot \sum\limits_{h=1}^{j-1}\hat{p}_h)$.
	\end{proof}
	
	Finally, we prove Theorem~\ref{theor:WSVF}, which follows immediately from Lemma~\ref{lem:WSVFmain} and Claims~\ref{clm:opt_star} and \ref{clm:1_fast}.
	\begin{proofof}{Theorem~\ref{theor:WSVF}}
		$$C_M(I) \leq C_N(I) + \frac{C_1(\hat{I})}{(1-\alpha)M}\leq  C_N(I) + \frac{C^*_M(\hat{I})}{1-\alpha}\leq  C^*_M(I) \left(\frac{1}{1-\alpha}+1\right).$$
	\end{proofof}
	
	\section{Constant Approximation Algorithm for all Instances}
	In this section, we extend the previous results to remove the dependency on $\alpha$, the maximum demand of any job, which leads to a constant approximation algorithm. We rely on the observations that, given a set of jobs with demands that are larger than $1/2$, only a single job can run on a machine at any time, and that the \gls{WSVF} has a constant approximation ratio on jobs with demands equal to or smaller than $1/2$. Thus, by splitting the jobs based on their resource demands into two  sets of machines, we achieve a constant approximation algorithm.
	
	To this end, we introduce the Hybrid-\gls{WSVF}:
	
	\begin{algorithm}[H]
		\label{alg:hybrid}
		\SetAlgoLined
		\begin{algorithmic}
			\STATE Split the jobs in $I$ into two sets $I^l = \{j: d_j \leq \frac{1}{2}\}$ and $I^h = \{j: d_j > \frac{1}{2}\}$. 
			
			\STATE Schedule $I^l$ on $M_1 = \lceil \frac{2(M-2)}{3}\rceil +1$ machines using \gls{WSVF}.
			\STATE Schedule $I^h$ on $M_2=M-M_1$ using \gls{WSPT}. 
		\end{algorithmic}
		
		\caption{HYBRID-\gls{WSVF}}
	\end{algorithm}
	
	\begin{theorem}\label{theor:H-WSVF}
		HYBRID-WSVF is a $4+o(\frac{1}{M})$-approximation for the weighted completion time minimization problem where $M$ is the number of machines.
	\end{theorem}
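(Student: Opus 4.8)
The plan is to analyze the two groups of machines separately and then combine the bounds. Write $\mathrm{OPT}=C^*_M(I)$ for the optimal cost on all $M$ machines, and let $\mathrm{OPT}^l$ (resp.\ $\mathrm{OPT}^h$) be the optimal cost of scheduling only $I^l$ (resp.\ only $I^h$) on all $M$ machines. Since an optimal schedule for $I$ restricted to the jobs of $I^l$ is a feasible schedule for $I^l$ alone, and likewise for $I^h$, we have $\mathrm{OPT}^l \le \mathrm{OPT}$ and $\mathrm{OPT}^h \le \mathrm{OPT}$. The total cost of HYBRID-WSVF is the sum of the cost incurred on the $M_1$ machines running WSVF on $I^l$ and the cost incurred on the $M_2$ machines running WSPT on $I^h$, so it suffices to bound each of these by roughly $2\,\mathrm{OPT}$ (up to lower-order terms in $1/M$).

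For the low-demand part, every job in $I^l$ has $d_j \le 1/2$, so Theorem~\ref{theor:WSVF} applies with $\alpha = 1/2$, giving that WSVF on $M_1$ machines costs at most $\bigl(\tfrac{1}{1-1/2}+1\bigr)\,C^*_{M_1}(I^l) = 3\,C^*_{M_1}(I^l)$. The catch is that $C^*_{M_1}(I^l)$ is the optimum on only $M_1 \approx \tfrac{2M}{3}$ machines, not on all $M$; I would use Claim~\ref{clm:1_fast}-style reasoning (Eastman's bound) to relate $C^*_{M_1}(I^l)$ to $C^*_M(I^l) \le \mathrm{OPT}$, losing a factor of roughly $M/M_1 \to 3/2$. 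Combining, the low-demand cost is at most about $3 \cdot \tfrac{3}{2}\,\mathrm{OPT}$ — which is too large. The right fix, and the step I expect to be the main obstacle, is to be more careful: rather than passing through $C^*_{M_1}$, apply Lemma~\ref{lem:WSVFmain} directly with $\alpha=1/2$ and $M_1$ machines to get $C_{M_1}(I^l) \le C_N(I^l) + \tfrac{2}{M_1} C_1(\hat I^l)$, then bound $C_N(I^l) \le \mathrm{OPT}$ (Observation~\ref{obv:cn}) and $\tfrac{1}{M_1}C_1(\hat I^l) \le \tfrac{M}{M_1} C^*_M(\hat I^l) \le \tfrac{M}{M_1}\mathrm{OPT}$ via Claims~\ref{clm:1_fast} and~\ref{clm:opt_star}. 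With $M_1 = \lceil \tfrac{2(M-2)}{3}\rceil + 1$, the ratio $M/M_1 \to 3/2$, so the low-demand cost is at most $\bigl(1 + 2\cdot\tfrac{3}{2} + o(1)\bigr)\mathrm{OPT} = (4+o(1))\mathrm{OPT}$; one must check the choice of $M_1$ is exactly what makes the two error terms balance so the constant lands on $4$ rather than something larger.

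For the high-demand part, every job in $I^h$ has $d_j > 1/2$, so at most one such job runs on a machine at any time and the problem on $M_2$ machines is exactly the classical (non-capacitated) weighted completion time problem $P_{M_2}\,||\,\sum w_j c_j$ with processing times $p_j$. By the Eastman/Kawaguchi–Kyan analysis cited in the introduction, WSPT is a constant-factor (indeed $(1+\sqrt2)/2 < 2$) approximation for that problem, so $C_{M_2}(I^h) \le 2\,C^*_{M_2}(I^h)$; again I relate $C^*_{M_2}(I^h)$ to $C^*_M(I^h) \le \mathrm{OPT}$ through Eastman's inequality, losing a factor $M/M_2 \to 3$, which would give $6\,\mathrm{OPT}$ and overshoot. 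As in the low-demand case, the cleaner route is the additive Eastman decomposition $C_{M_2}(I^h) \le C_N(I^h) + \tfrac{1}{M_2}C_1(I^h) + (\text{cross terms})$, bounding $C_N(I^h)\le\mathrm{OPT}$ and $\tfrac{1}{M_2}C_1(I^h)$ against $C^*_M(I^h)\le\mathrm{OPT}$, so that the high-demand cost is also $(\text{small constant}+o(1))\,\mathrm{OPT}$. Summing the two parts gives the claimed $4+o(1/M)$; the delicate point throughout is that the two machine counts $M_1,M_2$ must be chosen so that the two "$M/M_i$" blow-ups, when combined with the $1+$ additive term from Lemma~\ref{lem:WSVFmain} and the WSPT constant, add up to exactly $4$ in the limit, and I would verify this by writing out the limiting arithmetic explicitly. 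I should also confirm the edge cases $M=2,3$ (where $M_1$ or $M_2$ might be degenerate) are handled, e.g.\ by putting all jobs on one group or checking the formula gives $M_1\le M$.
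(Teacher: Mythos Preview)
Your overall strategy—apply Lemma~\ref{lem:WSVFmain} with $\alpha=1/2$ on the $M_1$ machines and the additive Eastman bound $C_{M_2}(I^h)\le C_N(I^h)+\tfrac{1}{M_2}C_1(I^h)$ on the $M_2$ machines—is exactly the paper's approach. The gap is in how you combine the two pieces.

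You bound $C_N(I^l)$ and $C^*_M(\hat I^l)$ by the \emph{full} optimum $\mathrm{OPT}=C^*_M(I)$, obtaining $(1+2M/M_1+o(1))\,\mathrm{OPT}\to 4\,\mathrm{OPT}$ for the low-demand part \emph{alone}; the parallel high-demand computation yields $(1+M/M_2+o(1))\,\mathrm{OPT}\to 4\,\mathrm{OPT}$, and summing gives $8\,\mathrm{OPT}$, not $4\,\mathrm{OPT}$. (Your stated plan to ``bound each of these by roughly $2\,\mathrm{OPT}$'' is never realized.) The missing ingredient is the \emph{additive} relation $\mathrm{OPT}^l+\mathrm{OPT}^h\le \mathrm{OPT}$, which is strictly stronger than the two inequalities $\mathrm{OPT}^l\le\mathrm{OPT}$ and $\mathrm{OPT}^h\le\mathrm{OPT}$ you recorded: restricting the optimal schedule of $I$ to $I^l$ is feasible for $I^l$ and costs exactly the $I^l$-share of $\mathrm{OPT}$, and likewise for $I^h$; the two shares sum to $\mathrm{OPT}$. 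Using this, bound the low-demand part by $(1+2M/M_1)\,\mathrm{OPT}^l$ and the high-demand part by $(1+M/M_2)\,\mathrm{OPT}^h$ (that is, keep $C_N(I^l)\le\mathrm{OPT}^l$, $C^*_M(\hat I^l)\le\mathrm{OPT}^l$, and analogously for $I^h$, rather than passing to $\mathrm{OPT}$). The split $M_1\sim 2M/3$, $M_2\sim M/3$ equalizes both coefficients at $4$, and then
\[
C_{M_1}(I^l)+C_{M_2}(I^h)\;\le\;4\,\mathrm{OPT}^l+4\,\mathrm{OPT}^h\;\le\;4\,\mathrm{OPT}
\]
finishes the argument. (Two side remarks: the Eastman upper bound has no ``cross terms''—it is exactly $C_{M_2}\le C_N+\tfrac{1}{M_2}C_1$; and your first attempt for $I^h$ via Kawaguchi--Kyan times a machine-reduction factor is indeed lossy, so the additive route is the right one.)
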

	
	\begin{proof}
		First, we schedule the low demand jobs (i.e., $d_j\leq 0.5$) on $M_1$ machines.
		Thus, from Lemma~\ref{lem:WSVFmain}, we have
		\begin{eqnarray*}
			C_{M_1}(I^l) &\leq& C_N(I^l) + \frac{1}{(1-\alpha)M_1} \cdot C_1(\hat{I}^{l}). 
		\end{eqnarray*}
		For the low demand jobs, $\alpha$ values are within the interval $(0, 0.5]$ with a specific value per  $I^l$. Thus,
		\begin{eqnarray*}
			C_N(I^l) + \frac{1}{(1-\alpha)M_1} \cdot C_1(\hat{I}^{l}) &\leq& C_N(I^l) + \frac{2}{M_1} \cdot C_1(\hat{I}^{l}) \\ 
			&=& C_N(I^l) + \frac{2 M}{M_1} \cdot \frac{C_1(\hat{I}^{l})}{M} \leq C^*_M(I^l) + 2 \cdot \frac{M}{M_1} C^*_M(I^l),
		\end{eqnarray*}
		where the last \gls{RHS} inequality follows from Observation \ref{obv:cn}, Claim \ref{clm:opt_star} and Claim \ref{clm:1_fast}.

		Next, we deal with the high demand jobs ($d_j>0.5$), which are scheduled on $M_2$ machines. Since any two high demand jobs cannot run simultaneously on the same machine, the scheduling problem is equivalent to the weighted non-capacitated setting. Therefore, we can use a result from  \citeauthor{eastman1964bounds} \cite{eastman1964bounds} who proved the following bound for the \gls{WSPT} with $M$ machines:
		$$ C_M(\hat{I}) \leq C_N(\hat{I}) + \frac{1}{M}C_1(\hat{I}).$$
		Following their result, we state that:
		$$ C_{M_2}(I^h) \leq C_N(I^h) + \frac{1}{M_2}C_1(I^h) = C_N(I^h) + \frac{M}{M_2}\frac{C_1(I^h)}{M}\leq C^*_M(I^h) + \frac{M}{M_2}C^*_M(I^h),$$
		
		\noindent where the \gls{RHS} inequality follows from Observation \ref{obv:cn} and Claim~\ref{clm:1_fast}.
		
		By setting $M_1 = \lceil \frac{2(M-2)}{3}\rceil +1$ and $M_2 = M-M_1=\lfloor\frac{M-2}{3}\rfloor+1$, and by 
		observing $C^*_M(I) \geq C^*_M(I^h)+C^*_M(I^l)$, we conclude the proof for Theorem \ref{theor:H-WSVF}.
	\end{proof}
	
	\section{Conclusions and Future Work}
	Algorithms for scheduling jobs on capacitated machines can significantly affect the performance of cloud computing environments. While such environments handle jobs of varying importance (e.g., cost), there are no constant approximation algorithms, to the best of our knowledge, for the related problem of minimizing the weighted sum of job completion times. This paper closes this gap. The suggested algorithm also improves the best-known approximation ratio for the non-weighted problem (for $M \ge 2$). 
	
	The results presented in this paper may be enhanced by going in several research directions, of which we mention three. The first significant theoretical extension may consider capacitated problems with multiple capacitated resources. In real-world cloud computing environments, CPU, memory, storage and bandwidth may be scarce resources. To realize such an extension, one would need to consider a multidimensional demand for each job. 
	A second extension would be to use our methods to find performance guarantees in stochastic environments in which, for example, processing duration can be characterized using probabilistic knowledge. The third research direction would be to develop a model that accommodates jobs with release dates and deadlines. Such research may  also be considered to be a natural extension of the current model. 
	\printbibliography
	\appendix
	\section*{Appendix: The single machine case}
	In this section, we introduce a different algorithm for the special case of $M=1$, and conclude that for the weighted completion time minimization for capacitated machine, there exists a constant competitive algorithm for any number of machines.  Note that Algorithm \ref{alg:hybrid} is well defined for $M \geq 2$ machines; for $M=1$ machines, the partition approach cannot be applied.
	Instead, we extend the ideas of the non-weighted case for $M=1$ in \cite{im2016scheduling} to the general weighted case.
	
	The algorithm uses two main tools, a knapsack algorithm that determines a subset of highest weight jobs with a total volume restriction, and a 2D-strip packing algorithm for scheduling this set, where each job $j$ is represented by a rectangle $r_j$ defined with width $p_j$ and height $d_j$.
	
	
	The algorithm will work in iterations until it assigns all jobs, where in iteration $\ell$ the algorithm will execute the following steps:
	\begin{enumerate}
		\item Solve a knapsack packing problem: Compute $J_\ell$, a maximum weighted set of jobs with a total volume less than $2^\ell$ using a $1+\epsilon$ resource augmentation knapsack algorithm.
		\item Pack the jobs in $J_\ell$ into a $3$ strips of width $(1 + \epsilon) 2^\ell$ and height $1$
		\item Concatenate the strips from the earlier step and schedule the job on the machine. 
	\end{enumerate}
	
	Note that after $\ell^{\max} = \log(\sum_{j\in I} v_j)$ iterations, all the jobs are packed, and  the algorithm can discard jobs that were scheduled in previous iterations. We will show that the first two steps of finding the set of jobs and packing it into strips can be done in polynomial time. Finding a maximum weighted set of jobs ($\arg\max_J\{\sum_{j \in J}w_j: \sum_{j\in J}v_j \leq B \}$) is equivalent to the knapsack problem: 
	Given a set of items, each with a size and a value, determine which item to include in a collection so that the total size is less than or equal to a given limit and the total value is as large as possible.
	We apply the resource augmentation solution of ~\cite{williamson2011design} for this problem, which proves that given a bound on the total size $L$, there exists a polynomial time in which  a set with of a total size $(1+\epsilon)\cdot L$ with a total profit of at least the optimal profit of a set with total size $L$ can be computed.
	Second, we utilize \citeauthor{jansen2007maximizing}'s algorithm~\cite{jansen2007maximizing}, which packs rectangles (without rotations) with total volume of  $L$ and maximal height of $1$ into a $3$ strips of width $L$ and height $1$.
	
	\begin{algorithm}[H]
		\label{alg:one_machine}
		\SetAlgoLined
		\begin{algorithmic}
			\FOR {$\ell = 0,1, 2, \dots, \ell^{\max}$}
			\STATE Compute $J_\ell\subseteq I$, a maximal weighted set with total volume of at most $2^\ell \cdot (1+\epsilon)$. 
			\STATE Pack $J_\ell$ into strip $S$ of height $1$ and width $3\cdot(1+\epsilon)\cdot 2^\ell$.
			\STATE  Schedule $S$ in the interval $[3(1+\epsilon) \sum \limits_{h=0}^{\ell-1}2^h,3(1+\epsilon) \sum \limits_{h=0}^{\ell}2^h)$
			\ENDFOR
		\end{algorithmic}
		
		\caption{PackAndSchedule($I$)}
	\end{algorithm}
	
	\begin{theorem}\label{theor:one_machine}
		PackAndSchedule is a $12(1+\epsilon)$ approximation polynomial-time algorithm for total weighted completion time minimization in the single machine case.
	\end{theorem}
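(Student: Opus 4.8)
The plan is to bound the cost of \texttt{PackAndSchedule} iteration by iteration, charging the jobs scheduled in iteration $\ell$ against a lower bound on the optimal cost. First I would fix notation: by construction, every job with completion time in iteration $\ell$ finishes no later than $3(1+\epsilon)\sum_{h=0}^{\ell}2^h = 3(1+\epsilon)(2^{\ell+1}-1) \le 6(1+\epsilon)2^{\ell}$, so it suffices to show that the total weight of jobs \emph{not} scheduled by the end of iteration $\ell$ is small compared to what an optimal schedule must pay. The key structural fact is that the knapsack step picks a maximum-weight set of total volume at most $(1+\epsilon)2^\ell$, and Jansen--Zhang strip packing then actually schedules \emph{all} of that set within the allotted three strips; hence after iteration $\ell$ the only unfinished jobs are those excluded by every knapsack call up to $\ell$, i.e. jobs that do not fit in the "best volume-$(1+\epsilon)2^\ell$ bundle."

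The crucial lemma I would prove is a lower bound on $C_1^*(I)$ in terms of these leftover weights. Let $W_\ell$ denote the total weight of jobs with (optimal single-machine) completion time greater than $2^{\ell}$; equivalently, by the volume-preservation argument already used in Claim~\ref{clm:opt_star}, at most $2^{\ell}$ total volume can be completed by time $2^\ell$ on one machine, so the maximum weight completable by time $2^\ell$ is at most the optimal knapsack value for volume bound $2^\ell$, which is at most the value our augmented knapsack gets for bound $(1+\epsilon)2^{\ell}$. Therefore the set of jobs left unfinished after our iteration $\ell$ has total weight at most $W_\ell$ (the weight OPT has not finished by time $2^\ell$). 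Then I would write $C(\texttt{PackAndSchedule}) = \sum_{j} w_j c_j \le \sum_{\ell \ge 0} 6(1+\epsilon)2^{\ell}\cdot(\text{weight finishing in iteration }\ell)$, regroup the telescoping sum as $6(1+\epsilon)\sum_{\ell\ge 0} 2^{\ell}\cdot(\text{weight unfinished after iter }\ell-1) \le 6(1+\epsilon)\sum_{\ell}2^{\ell} W_{\ell-1}$, and finally compare this against $C_1^*(I) \ge \sum_{\ell} 2^{\ell-1}(W_{\ell-1}-W_\ell)$ or more simply $C_1^*(I) = \sum_j w_j c_j^* \ge \sum_{\ell\ge 1} 2^{\ell-1}(W_{\ell-1}-W_{\ell})$-type estimate; a clean geometric summation gives a factor $2$ loss from the dyadic bucketing, producing the overall $12(1+\epsilon)$.

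Concretely, I would carry out the steps in this order: (i) verify polynomial running time — the resource-augmented knapsack from \cite{williamson2011design} and the strip packing from \cite{jansen2007maximizing} are both polynomial, and $\ell^{\max}=\log(\sum_j v_j)$ iterations suffice since after that all volume fits, so dropping already-scheduled jobs keeps each call polynomial; (ii) establish that iteration $\ell$ completes \emph{every} job in $J_\ell$ by time $6(1+\epsilon)2^\ell$ (three width-$(1+\epsilon)2^\ell$ strips concatenated, height $1$, feasible by Jansen--Zhang since all $d_j \le 1$); (iii) prove the leftover-weight bound: the weight of jobs unscheduled after iteration $\ell$ is at most the weight OPT leaves unfinished at time $2^\ell$, using volume preservation plus optimality of the knapsack subroutine against the volume-$2^\ell$ prefix of OPT's schedule; (iv) sum the dyadic contributions and match against the dyadic lower bound for $C_1^*(I)$, tracking the constants ($6$ from strip concatenation $\times$ $2$ from dyadic rounding).

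The main obstacle I expect is step (iii) — making the comparison between "jobs our greedy dyadic procedure has not yet placed" and "jobs OPT has not yet completed" fully rigorous. The subtlety is that our knapsack at level $\ell$ is run on the \emph{residual} instance (jobs not scheduled in levels $<\ell$), not the full instance, so one must argue monotonically that the union of what we have scheduled through level $\ell$ has weight at least the optimal weight-completable-by-$2^\ell$; this needs an exchange/induction argument showing the greedy dyadic choices never fall behind, combined carefully with the $(1+\epsilon)$ volume slack so that our level-$\ell$ bundle dominates any volume-$2^\ell$ feasible set. Once that domination is in hand, the rest is a routine geometric-series computation, and the stated $12(1+\epsilon)$ bound follows.
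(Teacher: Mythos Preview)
Your proposal follows essentially the same route as the paper: bound the algorithm's cost by the dyadic endpoint $3(1+\epsilon)(2^{\ell+1}-1)\le 6(1+\epsilon)2^{\ell}$ times the weight still outstanding, lower-bound $C_1^*(I)$ via volume preservation plus the knapsack guarantee, and compare the two geometric sums to extract the factor $12(1+\epsilon)$. The polynomial-time and feasibility checks you outline in (i)--(ii) are exactly what the paper does.

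The one place you diverge is your ``main obstacle'' in step (iii). You assume the knapsack at level $\ell$ is run on the \emph{residual} instance and therefore anticipate needing an exchange/induction argument to show the cumulative scheduled weight dominates OPT's. But look again at the algorithm: it computes $J_\ell\subseteq I$, a maximum-weight subset of the \emph{full} instance with volume at most $(1+\epsilon)2^\ell$, and only afterwards discards already-scheduled jobs. Consequently, by the end of iteration $\ell$ every job in $J_\ell$ has been placed (now or earlier), so the outstanding weight is at most $W-W_\ell$ where $W_\ell=\sum_{j\in J_\ell}w_j$. Since OPT can complete at most $2^\ell$ total volume by time $2^\ell$, and $J_\ell$ is (by the augmented knapsack guarantee) at least as heavy as any volume-$2^\ell$ subset, the comparison is immediate: OPT leaves weight $\ge W-W_\ell$ unfinished at time $2^\ell$. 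No monotone exchange argument is needed, and the remainder is the routine geometric-series calculation you describe.
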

	\begin{proof}
		
		The algorithm provides a feasible scheduling allocation since each of the 2D packing solution strips having a maximal height (demand) of $1$ is assigned to a disjoint time interval.
		
		Let $W_\ell = \sum_{j\in J_\ell} w_j$ be the sum of weights that are processed in  iteration $\ell$. Note that jobs in $J_{\ell}$ are scheduled until time  $3(1+\epsilon)\cdot (2^{\ell+1}-1)$. In addition, when jobs in $J_{\ell}$ are processed, the total weight of jobs that are not yet completed is at most $W-W_{\ell-1}$.
		Therefore,  we can bound the cost of the algorithm \ref{alg:one_machine} by summing over the iterations and multiplying the unprocessed weight by the completion time:
		$$ C_1(I) \leq 3(1+\epsilon) \sum \limits_{\ell \geq 0} (2^{\ell+1}-1)(W-W_{\ell-1})$$
		The optimal scheduler can complete jobs with a total weight of at most $W_\ell$
		up to time $2^{\ell}$, because by volume preservation, it is not possible to pack more than $2^{\ell}$ volume until this point, and by the correctness of the knapsack algorithm, this is the maximum profit (weight) that can be packed by $W_\ell$ if the total volume is at most $2^{\ell}$; therefore, there are jobs with a total weight of at least $W-W_\ell$ that are unprocessed at time $2^{\ell}$. Hence, we can give a lower bound for the cost of the optimal scheduler:
		$$C_1^*(I) \geq W + \sum \limits_{\ell \geq 0} 2^{\ell}(W-W_\ell)$$
		
		
		%
		By combining the two above inequalities, we have that PackAndSchedule is a $12(1+\epsilon)$ approximation algorithm as required.
	\end{proof}

\end{document}